\documentclass[12pt, 
]{article}

\usepackage{booktabs}

\usepackage[utf8]{inputenc}
\usepackage{amssymb}
\usepackage{amsmath}

\usepackage{geometry}
\newtheorem{theorem}{Theorem}

\newtheorem{definition}{Definition}
\newtheorem{example}{Example}
\newtheorem{lemma}{Lemma}
\newtheorem{remark}{Remark}

\newenvironment{proof}[1][Proof]{\emph{#1.} }{\  \hfill $\square $ \vspace{5 pt}}

\usepackage{natbib}
\usepackage{amssymb}
\usepackage[dvips]{graphics}
\usepackage{amsmath}

\usepackage{mathpazo}
\usepackage{enumerate}

\usepackage{amsfonts}

\usepackage{amssymb}

\usepackage{enumerate}

\usepackage{mathrsfs}

\usepackage[usenames]{color}
\usepackage{cancel}

\usepackage{pgf,tikz}

\usetikzlibrary{decorations.pathreplacing,angles,quotes}
\usetikzlibrary{arrows.meta}
\usetikzlibrary{arrows, decorations.markings}
\tikzset{myptr/.style={decoration={markings,mark=at position 1 with %
       {\arrow[scale=2,>=stealth]{>}}},postaction={decorate}}}

\usepackage{hyperref}
\hypersetup{
    colorlinks,
    citecolor=blue,
    filecolor=black,
    linkcolor=blue,
    urlcolor=blue
}

\usepackage{pgf,tikz}
\usetikzlibrary{arrows}

\usepackage{fancyhdr}

\usepackage{emptypage}

\usepackage[T1]{fontenc}
\DeclareFontFamily{T1}{calligra}{}
\DeclareFontShape{T1}{calligra}{m}{n}{<->s*[1.44]callig15}{}
\DeclareMathAlphabet\mathcalligra   {T1}{calligra} {m} {n}

\usepackage{multirow}
\usepackage{array}
\usepackage{mathtools}
\usepackage{arydshln}
\usepackage{threeparttable}
\usepackage{booktabs,caption}

\usepackage{ifthen}
\newboolean{showcomments}
\setboolean{showcomments}{true}

\newcommand{\pablo}[1]{  \ifthenelse{\boolean{showcomments}}
{\textcolor{green!50!black}{(T: #1)}}{}}
\newcommand{\marcelo}[1]{\ifthenelse{\boolean{showcomments}}
{\textcolor{red}{(M: #1)}}{}}
\newcommand{\agustin}[1]{  \ifthenelse{\boolean{showcomments}}
{\textcolor{blue!50!black}{(T: #1)}}{}}

\usepackage{orcidlink}
\usepackage{marvosym} 

\begin{document}

\title{Dictatorial profiles: a proof of Arrow's theorem\thanks{I thank Jordi Massó, Éric Rémila, and especially Philippe Solal for their helpful comments. 
}}

\author{Agustín G. Bonifacio\thanks{
 GATE Lyon--St-Étienne, Université Jean Monnet, Saint-\'Etienne, France.
} \hspace{0.75 pt} \thanks{Instituto de Matem\'{a}tica Aplicada San Luis, 
Universidad Nacional de San Luis and CONICET, San Luis, Argentina.} \hspace{0.75 pt} \textsuperscript{\Letter}\hspace{0.35 pt} \orcidlink{0000-0003-2239-8673}}

\date{\today}

\maketitle

\begingroup
\renewcommand{\thefootnote}{\Letter}
\footnotetext[0]{\href{mailto:agustinbonifacio@gmail.com}
       {\texttt{agustinbonifacio@gmail.com}}}
\endgroup

\begin{abstract}

We study social welfare rules with strict individual and social preferences. A preference profile is \emph{dictatorial for a rule} if the social preference coincides with some agent’s preference and no agent with a different preference can change the social preference unilaterally. 
We establish that (i) a rule is \emph{dictatorial} if and only if all of its profiles are dictatorial; 
and (ii) if a rule satisfies \emph{unanimity} and \emph{independence (of irrelevant alternatives)}, then all its profiles must be dictatorial. 
These results constitute a proof of Arrow's Impossibility Theorem.

\bigskip

\noindent \emph{JEL classification:} D71.\bigskip

\noindent \emph{Keywords: 
} Arrow's Impossibility Theorem, dictatorial profiles, unanimity, independence of irrelevant alternatives. 

\end{abstract}

\section{Introduction}

Arrow's Impossibility Theorem states that, with finitely many agents, at least three alternatives, and all strict preferences over alternatives admissible, a social welfare rule satisfying \emph{unanimity} and \emph{independence of irrelevant alternatives} must be \emph{dictatorial} \citep{arrow1963social}. In this note, we provide a proof based on the structure that these properties induce at the profile level.

Arrow's theorem draws its force from the reasonableness of the
demands it makes incompatible. Returning to its proof is a way
of understanding how principles we have good reasons to accept
can lead to a conclusion we have good reasons to resist. The theorem has been proved in several ways
\citep{barbera1980pivotal,reny2001arrow,
geanakoplos2005three,cato2013alternative}. 
Recent contributions include \citet{dantoni2025condorcet} and
\citet{rajsbaum2026combinatorial}.

Our approach focuses on the notion of  dictatorial profile. 
A preference profile is \emph{dictatorial for a rule} if the social preference coincides with some agent’s preference and no agent with a different preference can change the social preference unilaterally. 
This profilewise notion of dictatorial power is in the spirit of the one introduced for social choice rules in \cite{bonifacio2024trade}. 

The idea behind our proof is the following. First, without assuming \emph{unanimity} or \emph{independence}, we show that a rule is \emph{dictatorial} if and only if all of its profiles are dictatorial. The argument applies directly to any finite number of agents: it identifies a profile at which only one agent reports the social preference and shows that this agent's preference must be selected at every profile. Then, we establish that, under \emph{independence}, any non-dictatorial profile can be transformed into a profile that violates \emph{unanimity}. The construction preserves the original social preference throughout. It first eliminates all individual reports equal to that preference, then rearranges the reports so that all agents agree on a comparison opposite to the social one. Together, these results yield Arrow's theorem.

The remainder of the paper is organized as follows. Section \ref{preliminaries} introduces the framework and the standard properties of social welfare rules. Section \ref{profile structure} presents the profilewise notion of  dictatorial power and proves Arrow's Impossibility Theorem. Section \ref{section concluding} gathers some final remarks.

\section{Preliminaries}\label{preliminaries}

There is a set of \textit{agents }$N=\{1,\ldots ,n\}$, with $n\geq 2$, and a finite set of alternatives $X$, with  $|X|=m \geq 3$. Each agent $i\in N$ has a \textit{preference} $P_{i}$ over $X$, which is a strict linear order. The best alternative according to $P_i$ is $t(P_i)$, and the worst is $b(P_i)$. Sometimes, we write $P_i$ as an ordered list: $P_i:x,y,z,\ldots$.  
Let $\mathcal{P}$ be the (universal) domain of all
strict preferences over $X.$  A preference not attached to any agent is denoted by $P_0$. A \textit{(preference) profile} is an ordered list of $n$ preferences, 
 $P=(P_{1},\ldots ,P_{n})\in \mathcal{P}^{n},$  one for each agent. Given $S \subseteq N$, sometimes we write $-S$ as shorthand for $N \setminus S$, and $P_S$ for a subprofile of preferences for agents in $S$. Given $P \in \mathcal{P}^n$,  an expression like $(P_S',P_{-S})$ denotes a new profile assigning $P_i'$ for each $i\in S$.\footnote{As is usual in the literature, when $S=\{j\}$ we simply write $(P'_j,P_{-j})$.} When all agents in $S$ share preference $P_0$ we write $[\,P_0\,]_S$. The restriction of preference $P_i \in \mathcal{P}$ to a subset of alternatives $A \subsetneq  X$ is denoted by $P_i|_A$ and the restriction of a whole profile $P$ is denoted by $P|_A$.

Individual preferences are aggregated into a social preference. A \textit{(social welfare) rule} is a function $f:\mathcal{P}^{n}\longrightarrow \mathcal{P}$ that selects a social preference for each profile of individual preferences. Given $P \in \mathcal{P}^n$, sometimes we write $P_N$ instead of $f(P)$. 

A rule  $f : \mathcal{P}^{n}\longrightarrow \mathcal{P}$ is 
\begin{itemize} 
    \item \textit{unanimous} if, for each pair $x,y \in X$ and each $P \in \mathcal{P}^n$, we have $x f(P) y$ whenever $xP_iy$ for each $i \in N$, 
    \item \emph{independent (of irrelevant alternatives)} if, for each pair of distinct alternatives $x,y \in X$ 
and each pair of profiles $P,P' \in \mathcal{P}^n$ such that $P|_{\{x,y\}}=P'|_{\{x,y\}}$, we have $f(P)|_{\{x,y\}}=f(P')|_{\{x,y\}}$, and 
    \item \emph{dictatorial} if there is $i\in N$ (the \emph{dictator}) such that for each $P\in \mathcal{P}^{n}$, $f(P)=P_i$.
\end{itemize}
The set of all rules is $\mathcal{F}$, the set of all \emph{unanimous} rules is  $\mathcal{U}$, the set of all \emph{independent} rules is $\mathcal{I}$, and the set of all \emph{dictatorial} rules is $\mathcal{D}$. Arrow's impossibility theorem states that $\mathcal{U} \cap \mathcal{I} = \mathcal{D}$.  

\section{Dictatorial profiles and Arrow's theorem}\label{profile structure}

First, we present the central notion studied in the paper.

\begin{definition}\label{def dictatorial}
    Let $f \in \mathcal{F}$ and $P \in \mathcal{P}^n$. Profile $P$ is \textbf{dictatorial for $\boldsymbol{f}$} if there is $i \in N$ such that
\begin{enumerate}[(i)]
    \item $f(P)=P_i$, and 
    \item for each $j \in N\setminus \{i\}$ with $P_j \neq P_i$, we have $f(P_j',P_{-j})=f(P)$ for each $P_j' \in \mathcal{P}$.
\end{enumerate}
Let $\boldsymbol{D_f}$ denote the set of all dictatorial profiles for $f$.
\end{definition}

Our first result states that dictatorial rules are exactly those that have all of their profiles dictatorial. 

\begin{lemma}\label{lema dictatorial}
    $f \in \mathcal{D}$ if and only if $D_f=\mathcal{P}^n$.
\end{lemma}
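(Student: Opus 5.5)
We prove the two directions separately. The first is immediate; the second carries the substance.

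\emph{Only if.} Suppose $f\in\mathcal{D}$ with dictator $i$ and fix $P\in\mathcal{P}^n$. Then $f(P)=P_i$, which is (i). For any $j\neq i$ and any $P_j'$, the profile $(P_j',P_{-j})$ still assigns $P_i$ to agent $i$. Hence $f(P_j',P_{-j})=P_i=f(P)$, which gives (ii) for every $j$, and in particular for those with $P_j\neq P_i$. So $P\in D_f$.

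\emph{If.} Assume $D_f=\mathcal{P}^n$. The basic observation is the following: if $f(R)=A$ and $R_j\neq A$, then $f(R_j',R_{-j})=A$ for every $R_j'$. Indeed, the dictator $k$ of $R$ satisfies $R_k=A\neq R_j$, so (ii) applies to $j$. We call such agents \emph{free} at $R$. The plan has three steps.

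\begin{enumerate}[(1)]
\item \textbf{Find a profile with a unique reporter.} Among all profiles, pick $P$ minimizing the number $c\geq 1$ of agents with $P_j=f(P)=:A$; here $c\geq 1$ by (i). Using the observation repeatedly, move every non-reporter of $A$ to one common preference $B\neq A$. Each moved agent reports something other than $A$ before moving, and after moving still reports $B\neq A$, so every such agent is free at each stage. Thus $f$ stays $A$, and the count stays $c$.

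Suppose $c\geq 2$. Take two reporters of $A$; at least one of them, say $j$, is not the dictator $k$ of $P$. Since $m!\geq 6$, we can choose a preference $C\notin\{A,B\}$, and we set $P'=(C,P_{-j})$. There are three cases.
\begin{itemize}
\item If $f(P')=C$, then $j$ is the unique reporter of $f(P')$, contradicting minimality since $1<c$.
\item If $f(P')=D\neq C$, then $j$ is free at $P'$. Hence $f(P)=f(P')=D$, which forces $D=A$.
\item In that case $P'$ has only $c-1$ reporters of $f(P')=A$, again contradicting minimality.
\end{itemize}
Hence $c=1$, and some profile $P$ has a unique reporter $i$ with $P_i=A=f(P)$.
\end{enumerate}

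\begin{enumerate}[(1)]\setcounter{enumi}{1}
\item \textbf{Others cannot move the outcome.} Let $R$ have a unique reporter $i$ of $f(R)=A$. Then $f(A,Q_{-i})=A$ for every $Q_{-i}$. To see this, change the agents $j\neq i$ one at a time from $R_j$ to $Q_j$. At each stage the agent about to move still reports $R_j\neq A$ and the outcome is still $A$, so that agent is free and the outcome stays $A$. This holds even if some $Q_j$ equals $A$.
\end{enumerate}

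\begin{enumerate}[(1)]\setcounter{enumi}{2}
\item \textbf{Agent $i$ controls the outcome.} Fix any $Q\in\mathcal{P}^n$. Choose a preference $E\notin\{A,Q_i\}$, which exists since $m!\geq 3$. Let $R_{-i}=[\,E\,]_{N\setminus\{i\}}$ and $S=(Q_i,R_{-i})$. By step (2), $f(A,R_{-i})=A$. Let $k$ be the dictator of $S$.
\begin{itemize}
\item If $k\neq i$, then $S_k=E\neq Q_i=S_i$, so $i$ is free at $S$. This gives $f(A,R_{-i})=f(S)=E\neq A$, a contradiction.
\item So $k=i$ and $f(S)=Q_i$. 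Since no other agent reports $Q_i$, agent $i$ is the unique reporter at $S$.
\end{itemize}
Applying step (2) at $S$ gives $f(Q)=f(Q_i,Q_{-i})=Q_i$. As $Q$ was arbitrary, $i$ is a dictator and $f\in\mathcal{D}$.
\end{enumerate}

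The main obstacle is step (1): producing a unique reporter for an arbitrary $n$, which may exceed $m!$, without controlling what $f$ does. The minimal-count argument handles this. The key device is first collapsing all non-reporters onto a single preference $B$, so that a fresh preference $C$ is guaranteed to exist; this is where $m\geq 3$ is used.
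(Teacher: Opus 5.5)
Your proof is correct and follows essentially the same route as the paper. You minimize the number of agents reporting the outcome, collapse the non-reporters onto a single preference so that a third preference is available to rule out a count above one, and then use the ``free agent'' observation together with the profile $(Q_i,[\,E\,]_{N\setminus\{i\}})$ to show that the unique reporter dictates at every profile.
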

\begin{proof}
If $f \in \mathcal D$, then $D_f=\mathcal P^n$.
Conversely, suppose that $D_f=\mathcal P^n$.

\medskip

\noindent \textbf{Step 1: there are $\boldsymbol{P^\star\in\mathcal P^n}$ and $\boldsymbol{i^\star\in N}$ such that $$\boldsymbol{f(P^\star)=P_{i^\star}^\star \text{ and } P_j^\star\ne P_{i^\star}^\star \text{ for each }j\in N\setminus\{i^\star\}}.$$}
Choose $P\in\mathcal P^n$ minimizing
$\bigl|\{j\in N:P_j=f(P)\}\bigr|$. This number belongs to
$\{1,\ldots,n\}$ by condition (i) in Definition~\ref{def dictatorial},
so the minimum exists. Let $i\in N$ satisfy $f(P)=P_i$ and set
$N^\star=\{j\in N:P_j=P_i\}$.

Suppose that $|N^\star|>1$. Choose $\widetilde P_0\in\mathcal P\setminus\{P_i\}$ and
replace the preferences outside $N^\star$, one at a time, by
$\widetilde P_0$. Each replacement preserves the outcome by
condition (ii) in Definition~\ref{def dictatorial}. Thus,
\begin{equation}\label{primer perfil}
f([\,P_i\,]_{N^\star},[\,\widetilde P_0\,]_{-N^\star})=P_i.
\end{equation}
Now change the preference of some $k\in N^\star\setminus\{i\}$ to
$\widehat P_k\in\mathcal P\setminus\{P_i,\widetilde P_0\}$.
By condition (i), the new outcome belongs to
$\{P_i,\widetilde P_0,\widehat P_k\}$.
It cannot be $\widetilde P_0$: returning $k$'s preference to $P_i$
would preserve that outcome by condition (ii), contradicting
\eqref{primer perfil}.
The remaining outcomes, $P_i$ and $\widehat P_k$, are reported by
$|N^\star|-1$ agents and one agent, respectively. Both numbers are
smaller than $|N^\star|$, contradicting the choice of $P$.
Hence, $N^\star=\{i\}$. Thus, take $P^\star=P$
and $i^\star=i$.

\medskip

\noindent \textbf{Step 2: $\boldsymbol{i^\star}$ is the dictator.}
Starting from $P^\star$, replace the preferences of all other agents,
one at a time, by arbitrary preferences. Before her replacement,
each agent still reports her original preference, which differs
from $P_{i^\star}^\star$. Condition (ii) therefore gives
\begin{equation}\label{segundo perfil}
f(P_{i^\star}^\star,P_{-i^\star})=P_{i^\star}^\star
\quad\text{for each }P_{-i^\star}\in\mathcal P^{n-1}.
\end{equation}
Let $\overline P_{i^\star}\in\mathcal P\setminus\{P_{i^\star}^\star\}$ and choose
$\overline P_0\in\mathcal P\setminus\{P_{i^\star}^\star,\overline P_{i^\star}\}$.
At $(\overline P_{i^\star},[\,\overline P_0\,]_{-i^\star})$,
condition (i) restricts the outcome to $\overline P_{i^\star}$ or
$\overline P_0$. The latter is impossible: changing $i^\star$'s
preference to $P_{i^\star}^\star$ would preserve $\overline P_0$
by condition (ii), contradicting \eqref{segundo perfil}.
Thus, the outcome is $\overline P_{i^\star}$. Applying the same
replacement argument to this profile gives
\begin{equation}\label{tercer perfil}
f(\overline P_{i^\star},P_{-i^\star})=\overline P_{i^\star}
\quad\text{for each }P_{-i^\star}\in\mathcal P^{n-1}.
\end{equation}
Since $\overline P_{i^\star}$ was arbitrary,
\eqref{segundo perfil} and \eqref{tercer perfil} show that
$i^\star$ is the dictator. Hence, $f \in \mathcal D$.
\end{proof}

To continue, we need some definitions. Let $P_0,P_0', P_0'' \in \mathcal{P}$ and $x,y \in X$. 
First, following \cite{barbera1980pivotal}, two distinct alternatives $x,y$ are  \textbf{contiguous in $\boldsymbol{P_0}$} if there is no alternative $z \in X\setminus \{x,y\}$ such that either  $x\,P_0\, z \,P_0\, y$ or $y \,P_0\, z \,P_0\, x$. Second, 
$P_0'$ is a \textbf{switch of $\boldsymbol{P_0}$} if there is a pair of contiguous alternatives in $P_0$ whose order is reversed at $P_0'$ while all other pairwise comparisons are kept unchanged. Finally, $P_0''$ is the \textbf{contiguous transformation of $\boldsymbol{P_0'}$ with respect to $\boldsymbol{P_0}$} if, whenever $x,y$ are contiguous in $P_0$ and $x \,P_0\, y$, we have that (i) $x \,P_0'\, y$ implies that $x,y$ are contiguous in $P_0''$ and $x \,P_0''\, y$, and (ii) $y \,P_0'\, x$ implies $y \,P_0''\, x$. 

To see that this transformation is well defined, cut the ordered list $P_0$ at each contiguous pair reversed by $P'_0$. Condition (i) requires each resulting segment to remain intact,
while condition (ii) places each segment below the next one. Thus, reversing the order of the segments, while preserving their internal order, gives the unique preference satisfying both conditions.

\begin{example}
     Assume $X=\{x,y,z,t,u,v,w\}$ and consider $P_0, P_0' \in \mathcal{P}$ given by $$P_0: x,y,z,t,u,v,w \quad \text{ and } \quad P_0':z,x,v,t,y,w,u.$$  
     In this case, we obtain segments $[x,y], [z,t,u],[v,w]$. 
     Thus, $P_0'': v,w,z,t,u,x,y$.   \hfill $\Diamond$
\end{example}
If $P'_0\ne P_0$, at least one contiguous comparison in $P_0$
is reversed in $P'_0$. Hence, the construction has at least
two segments and places $b(P_0)$ above $t(P_0)$.

\begin{remark}\label{remark contiguous} 
    Let $P_0,P'_0,P''_0 \in \mathcal{P}$. If $P''_0$ is a contiguous transformation of $P_0'$ with respect to $P_0$, then 
    \begin{enumerate}[(i)]
        \item $x \,P_0''\, y$ if and only if $x \,P_0'\, y$ whenever $x$ and $y$ are contiguous in $P_0$, and
        \item $b(P_0) \,P''_0\, t(P_0)$ when $P_0 \neq P_0'$. 
    \end{enumerate}
\end{remark}

Remember that Arrow's theorem states that $\mathcal U \cap \mathcal I = \mathcal D$. 
It is easy to check that $\mathcal D \subseteq \mathcal U \cap \mathcal I$. The other inclusion is proved next.

\begin{theorem}\label{lema unanime implica dictatorial}
    $\mathcal U \cap \mathcal I \subseteq \mathcal D.$
\end{theorem}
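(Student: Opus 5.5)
By Lemma~\ref{lema dictatorial}, it suffices to show that if $f\in\mathcal U\cap\mathcal I$ then $D_f=\mathcal P^n$. I would argue by contradiction: fix $P\in\mathcal P^n\setminus D_f$, write $R=f(P)$, and build from $P$, using \emph{independence} only, a profile whose social preference is still $R$ but at which every agent ranks $b(R)$ above $t(R)$. Since $t(R)\,R\,b(R)$, this violates \emph{unanimity}.

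\textbf{Key tool.} A linear order is determined by its comparisons of pairs that are contiguous in $R$, because these comparisons chain together by transitivity. So if $P'$ satisfies $P'|_{\{x,y\}}=P|_{\{x,y\}}$ for every pair $x,y$ contiguous in $R$, then \emph{independence} gives that $f(P')$ agrees with $R$ on all those pairs, and hence $f(P')=R$.

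\textbf{Case 1: no agent reports $R$.} For each $i$, let $P''_i$ be the contiguous transformation of $P_i$ with respect to $R$. By Remark~\ref{remark contiguous}(i), $P''_i$ agrees with $P_i$ on every pair contiguous in $R$, so the key tool gives $f(P'')=R$. By Remark~\ref{remark contiguous}(ii), since $P_i\neq R$ for every $i$, we have $b(R)\,P''_i\,t(R)$ for every $i$. This contradicts \emph{unanimity}. In particular, condition (i) of Definition~\ref{def dictatorial} holds at every profile: some agent always reports the social preference.

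\textbf{Case 2: some agent reports $R$.} This is the step I expect to be the main obstacle: eliminating the reports equal to $R$ while keeping the social preference fixed. Condition (ii) of Definition~\ref{def dictatorial} at $P$ depends only on the reported preference $R$, not on which agent reports it. So non-dictatorialness gives some $j$ with $P_j\neq R$ and some $P'_j$ such that $R':=f(P'_j,P_{-j})\neq R$. Choose a pair $x,y$ contiguous in $R$ with $x\,R\,y$ and $y\,R'\,x$. By \emph{independence}, $P_j$ and $P'_j$ must differ on $\{x,y\}$, so agent $j$ is pivotal on this pair at $P_{-j}$.

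I would use this pivotality as follows. Replace each agent $k$ reporting $R$ by the switch $R^{xy}$ of $R$ that reverses $x,y$. To preserve the social comparison of $x$ and $y$, compensate by fixing $j$'s comparison on $\{x,y\}$ at the value that produced $x$ above $y$, using \emph{independence} on $\{x,y\}$. All other $R$-contiguous pairs are untouched, so by the key tool the social preference stays $R$. After this step nobody reports $R$. Then Case 1 applies (the contiguous transformation again preserves $R$-contiguous comparisons), giving the contradiction.

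The delicate point is whether a single pivotal agent can offset the switches of several agents on $\{x,y\}$ simultaneously. If not, I would instead run an induction on the number of agents reporting $R$: pick a non-dictatorial profile minimizing that number, and show that one switch of a single $R$-reporter, compensated by the pivotal agent $j$, produces a non-dictatorial profile with the same social preference and one fewer report of $R$. The base case of this induction is Case 1.
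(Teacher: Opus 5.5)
Your Case~1 is exactly the paper's Step~2 and is correct. So is the extraction of a pair $\{x,y\}$, contiguous in $R$, on which $j$ is pivotal at $P_{-j}$. The gap is the compensation step in Case~2. Pivotality is a statement about the fixed background $P_{-j}|_{\{x,y\}}$: it gives you the social comparison of $x$ and $y$ at exactly two restricted profiles, $(P_j,P_{-j})|_{\{x,y\}}$ and $(P'_j,P_{-j})|_{\{x,y\}}$. Once the $R$-reporters move to $R^{xy}$, the others' restriction to $\{x,y\}$ is a new one. \emph{Independence} links it to nothing you know, whichever comparison $j$ reports, and nothing in your argument rules out that both of $j$'s options now give $y$ above $x$. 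Moreover, the comparison ``that produced $x$ above $y$'' is just $P_j|_{\{x,y\}}$. So the step silently assumes that switching the $R$-reporters on $\{x,y\}$ leaves the social comparison unchanged, which is the whole difficulty. The one-at-a-time induction has the same defect, since a single switch is no easier to offset. It would also need the new profile to be non-dictatorial again, to extract a fresh pivotal agent, and nothing guarantees that.

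The missing idea is a way to show that \emph{some} switch of the $R$-reporters is absorbed, without fixing the pair in advance. The paper replaces all of $N^\star=\{i:P_i=R\}$ simultaneously by a switch $\widehat P_0$ of $R$. By \emph{independence} the outcome can differ from $R$ only on the switched pair, so it is $R$ or $\widehat P_0$. Suppose it were $\widehat P_0$ for every switch. Then \emph{independence} and transitivity give $f([\,R^{-1}\,]_{N^\star},P_{-N^\star})=R^{-1}$, and hence $f([\,P'_0\,]_{N^\star},P_{-N^\star})=P'_0$ for every $P'_0\in\mathcal P$. Next comes a third-alternative argument: for each pair contiguous in $P_j$, let the group's common preference rank some $z$ strictly between them. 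This shows that switching $j$ to $P'_j$ cannot change the outcome when the group reports $P_j$ or $P_j^{-1}$. Since $R$ agrees on each pair with one of these two, \emph{independence} yields $f(P'_j,P_{-j})=R$, contradicting the choice of $j$. Hence some switch keeps the social preference at $R$, nobody then reports $R$, and your Case~1 finishes the proof.
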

\begin{proof} Let $f \in  \mathcal U \cap \mathcal I$. Assume $f \notin \mathcal D$. Then, by Lemma \ref{lema dictatorial}, $\mathcal{P}^n \setminus D_f \neq \emptyset$. Let $P \notin D_f$ and set $P_N:=f(P)$.

\medskip

\noindent \textbf{Step 1: there is $\boldsymbol{P^\star \in \mathcal{P}^n}$ such that $\boldsymbol{f(P^\star)=P_N}$ and $\boldsymbol{P^\star_i\neq P_N}$ for each $\boldsymbol{i \in N}$.}
Let $N^\star=\{i\in N:P_i=P_N\}$. If $N^\star=\emptyset$, take
$P^\star=P$. Otherwise, condition (i) in Definition~\ref{def dictatorial}
holds, so condition (ii) must fail. Hence, there are
$k\in N\setminus N^\star$ and $P'_k\in\mathcal P$ such that
\begin{equation}\label{eq:step1-deviation}
f(P'_k,P_{-k})\ne P_N.
\end{equation}

\medskip

\noindent \textbf{Claim: there is a switch $\boldsymbol{\widetilde P_0}$ of $\boldsymbol{P_N}$ such that $\boldsymbol{f([\,\widetilde P_0\,]_{N^\star},P_{-N^\star})=P_N}$.}
Assume otherwise. If $\widehat P_0$ is a switch of $P_N$,
\emph{independence} preserves every social comparison except possibly
the switched pair. Thus, the social preference is either $P_N$ or
$\widehat P_0$, and our assumption implies
\begin{equation}\label{eq:step1-switches}
f([\,\widehat P_0\,]_{N^\star},P_{-N^\star})=\widehat P_0
\quad\text{for each switch $\widehat P_0$ of $P_N$.}
\end{equation}

First, we show that
\begin{equation}\label{eq:step1-common}
f([\,P'_0\,]_{N^\star},P_{-N^\star})=P'_0
\quad\text{for each }P'_0\in\mathcal P.
\end{equation}
Let $P_N^{-1}$ be the reverse of $P_N$.\footnote{For any preference
$P_0$, its reverse $P_0^{-1}$ satisfies $x\,P_0^{-1}\,y$ if and only
if $y\,P_0\,x$ for distinct $x,y\in X$.}
For every pair contiguous in $P_N$, $P_N^{-1}$ agrees with the switch
reversing that pair. By \eqref{eq:step1-switches}, \emph{independence}
therefore transfers the reversed social comparison to
$([\,P_N^{-1}\,]_{N^\star},P_{-N^\star})$. Transitivity yields
\begin{equation}\label{eq:step1-reverse}
f([\,P_N^{-1}\,]_{N^\star},P_{-N^\star})=P_N^{-1}.
\end{equation}
On each pair of distinct alternatives, any $P'_0$ agrees with one of
$P_N$ and $P_N^{-1}$. Both common preferences reproduce themselves,
by $f(P)=P_N$ and \eqref{eq:step1-reverse}. Thus, \emph{independence}
gives every comparison of $P'_0$, proving \eqref{eq:step1-common}.

Second, fix $P'_0\in\{P_k,P_k^{-1}\}$ and take a pair contiguous
in $P_k$, labelled $x,y$ so that $x\,P'_0\,y$. Let $\widehat P_k$
be $P_k$ or its switch reversing $\{x,y\}$, chosen to agree with
$P'_k$ on this pair. Choose $z\in X\setminus\{x,y\}$ and
$P_0^+\in\mathcal P$ such that $x\,P_0^+\,z\,P_0^+\,y$.
By \eqref{eq:step1-common},
\[
f([\,P_0^+\,]_{N^\star},P_k,P_{-(N^\star\cup\{k\})})=P_0^+.
\]
Changing $k$'s preference to $\widehat P_k$ preserves her comparisons
on $\{x,z\}$ and $\{z,y\}$. Therefore, \emph{independence} preserves
$x$ above $z$ and $z$ above $y$ socially, and transitivity gives
$x$ above $y$.\footnote{The argument using a third alternative
between $x$ and $y$ also appears in the second proof of
\cite{geanakoplos2005three}.}
Since $P_0^+$ and $P'_0$ agree on $\{x,y\}$, as do $\widehat P_k$
and $P'_k$, \emph{independence} transfers this comparison to
$([\,P'_0\,]_{N^\star},P'_k,P_{-(N^\star\cup\{k\})})$.
These comparisons determine $P'_0$ by transitivity, because $P'_0$
is $P_k$ or its reverse. Hence,
\begin{equation}\label{eq:step1-two-orders}
f([\,P'_0\,]_{N^\star},P'_k,P_{-(N^\star\cup\{k\})})=P'_0
\quad\text{for }P'_0\in\{P_k,P_k^{-1}\}.
\end{equation}
On each pair, $P_N$ agrees with one of these two preferences.
By \eqref{eq:step1-two-orders}, \emph{independence} therefore gives
$f(P'_k,P_{-k})=P_N$, contradicting \eqref{eq:step1-deviation}.
This proves the claim.

\medskip

Take the switch $\widetilde P_0$ supplied by the Claim and set
$P^\star=([\,\widetilde P_0\,]_{N^\star},P_{-N^\star})$.
Then $f(P^\star)=P_N$. Agents in $N^\star$ now report
$\widetilde P_0\ne P_N$, while all other agents already had preferences
different from $P_N$. This completes Step~1.

\medskip

\noindent \textbf{Step 2: 
\emph{unanimity} is violated.} Let $\overline{P} \in \mathcal{P}^n$ be such that, for each $i \in N$,  $\overline{P}_i$ is the contiguous transformation of $P_i^\star$ with respect to $P_N$. First, we claim that \begin{equation}\label{transformada invariante}
    f(\overline{P})=P_N.
\end{equation}
To see this, take $x,y \in X$ contiguous in $P_N$. Since, by Remark \ref{remark contiguous} (i), $x\, \overline{P}_i\, y$ if and only if $x \, P_i^\star \, y$ for each $i \in N$, \emph{independence} implies $x \,f(\overline{P})\, y$ if and only if $x \,f(P^\star)\, y$. Then, by the transitivity of $f(\overline{P})$, we have  $f(\overline{P})=f(P^\star)$. As  $f(P^\star)=P_N$, we get  $f(\overline{P})=P_N$. Second, by Remark \ref{remark contiguous} (ii), $b(P_N) \,\overline{P}_i\, t(P_N)$ for each $i \in N$ whereas $t(P_
N) \, f(\overline{P}) \, b(P_N)$. 
This contradicts \emph{unanimity}. Hence, $f \in \mathcal D$.
\end{proof}

\section{Final Remarks}\label{section concluding}

Dictatorial profiles separate two questions: what makes a rule \emph{dictatorial}, and why \emph{independence} and \emph{unanimity} force it to be so. The first  can be answered without \emph{unanimity} or \emph{independence}: for finitely many agents, a rule is \emph{dictatorial} if all its profiles are dictatorial. For the second, \emph{independence} allows a non-dictatorial profile to be transformed into one violating \emph{unanimity}. Together, these observations yield Arrow's theorem.

The proof relies on strict preferences: comparisons between contiguous alternatives determine the social order and let us control changes in individual preferences. At least three alternatives are needed to  supply the three distinct preferences used in Lemma~\ref{lema dictatorial} and the third alternative placed between an exchanged pair in Step 1 of Theorem~\ref{lema unanime implica dictatorial}.

Lemma~\ref{lema dictatorial} requires a finite set of agents: changing one report need not reduce the cardinality of an infinite set of agents reporting the outcome, and invariance under individual changes need not persist after infinitely many replacements. With finitely many alternatives, the profilewise construction in Theorem~\ref{lema unanime implica dictatorial} applies to any set of agents: \emph{unanimity} and \emph{independence} force every profile to be dictatorial. With infinitely many agents, this property need not imply an individual dictator \citep{kirman1972arrow}.

\bibliographystyle{ecta}
\bibliography{biblioArrow2}

\end{document}